\def\reals{\mathbb{R}}
\def\nnreals{\reals_+}
\def\QuanSet{\mathcal{Z}}
\newcommand{\ToPars}[3]{\operatorname{\mathcal{W}}_{#1}\!\left(#2,#3\right)}
\def\setA{\mathcal{A}}
\newcommand{\PD}[2]{\mathcal{M}^{+}_{#1}\left(#2\right)}
\newcommand{\PDreals}[1]{\PD{#1}{\reals}}
\newcommand{\PDbounded}[1]{\mathcal{M}^{\oplus}_{#1}\left(\reals\right)}
\newcommand{\matrices}[2]{\mathcal{M}_{#1}\left(#2\right)}
\newcommand{\matricesreals}[1]{\matrices{#1}{\reals}}
\newcommand{\diag}[2]{\mathcal{D}_{#1}\left(#2\right)}
\newcommand{\diagPD}[1]{\mathcal{D}_{#1}^+}
\def\locpar{\mathbf{x}}
\def\scapar{\boldsymbol{\Psi}}
\def\scalarscapar{\psi}
\def\diagscapar{\boldsymbol{\Lambda}}
\def\noise{\mathbf{w}}
\def\contdata{\mathbf{y}}
\def\obsmat{\mathbf{S}}
\def\matrixC{\mathbf{C}}
\newcommand{\id}[1]{\mathbf{I}}
\newcommand{\zero}[1]{\mathbf{0}_{#1}}
\newcommand{\pdf}[2]{\operatorname{f_{#1}}\!\left(#2\right)}
\newcommand{\likelihood}[2]{\operatorname{\mathcal{L}}\!\left(#1;#2\right)}
\newcommand{\prob}[1]{\operatorname{P}_{#1}}
\newcommand{\normal}[2]{\operatorname{\mathcal{N}}\!\left(#1,#2\right)}
\def\quantizer{\operatorname{Q}}
\def\invquantizer{\operatorname{Q}^{-1}}
\newtheorem{theorem}{Theorem}
\newtheorem{lemma}[theorem]{Lemma}
\newtheorem{proposition}[theorem]{Proposition}
\theoremstyle{remark}
\newtheorem{remark}[theorem]{Remark}
\title{Convex Quantization Preserves Logconcavity}
\author{Pol~{del Aguila Pla},~\IEEEmembership{Member,~IEEE,}\thanks{
        Pol~{del Aguila Pla} is with the CIBM Center for Biomedical Imaging, in
        Switzerland.} Aleix~Boquet-Pujadas,\thanks{
        Pol~{del Aguila Pla} and Aleix~Boquet-Pujadas are with the Biomedical Imaging Group
        at the École polytechnique fédérale de Lausanne, in Lausanne,
        Switzerland.} Joakim~Jald\'{e}n,~\IEEEmembership{Senior~Member,~IEEE}\thanks{Joakim~Jald\'{e}n is with the Division of Information Science and Engineering at the School of Electrical Engineering and Computer Science at the KTH Royal Institute of Technology, in Stockholm, Sweden.}}
\begin{document}

\maketitle

    \begin{abstract}
        A logconcave likelihood is as important to proper statistical inference 
as a convex cost function is important to variational optimization.
Quantization is often disregarded when writing likelihood models, 
ignoring the limitations of the physical detectors used to collect the data. 
These two facts call for the question: would including quantization in likelihood models preclude logconcavity? are the true data likelihoods 
logconcave? We provide a general proof that the same simple 
assumption that leads to logconcave continuous-data likelihoods also
leads to logconcave quantized-data likelihoods, provided that convex
quantization regions are used.

    \end{abstract}
    \begin{IEEEkeywords}
        Bayesian statistics, Likelihood, Privacy-aware data analysis, 
        1-bit compressed sensing, Inverse problems.
    \end{IEEEkeywords}

	% Motivating introduction / abstract
	\section{Introduction}
	    Inference from signals in the digital domain is of central          %
importance in digital signal processing. However, the discrete 
nature of measurement devices is often disregarded or misrepresented
in building data likelihood models. When quantization is deemed fine
enough, the established procedure is to ignore it or to model it as
additive noise \cite{Widrow2008,Azizzadeh2019}.	Conversely,
the few works investigating coarse quantization do so under 
simplifying assumptions \cite{Li2017,Ren2019,Khobahi2019},
suggesting that optimal estimation might be intractable.
In this paper, we show that the exact likelihood of quantized data 
remains logconcave under widely applicable assumptions.

This result is of interest to several signal processing domains. 
For example, coarse quantization has been increasingly popular
 due to the 
advent of 1-bit compressed sensing
 techniques~\cite{Wen2016,Zayyani2016,Zayyani2016a,Li2017,Stein2018,Ren2019,Khobahi2019}, which can be seen as its limiting case. These promise to 
incorporate low-cost high-speed analog-to-digital converters (ADC)
into wireless-communications pipelines.
% assume likelihood logconcavity: \cite{Wen2016,Gao2018,Stein2018}
% rely on 1-bit or Gaussianity: \cite{Li2017,Ren2019,Khobahi2019}.
% scale parameter: \cite{Ren2019}

Quantization is also influential in the privacy-enhancement         %
literature~\cite{Aggarwal2008,Gao2018}. Besides simple 
noise-addition mechanisms, coarse quantization or ``aggregation'' is
one of the most straightforward techniques to induce differential 
privacy and $k$-anonymity on a database. However, privacy protection 
and data utility are conflicting objectives. To better understand 
this tradeoff in aggregation-based techniques, it is fundamental to 
characterize the properties of the likelihood after 
quantization.

Variational inverse problems with convex data terms, which can be   %
interpreted as maximum-a-posteriori (MAP) estimators for logconcave
likelihood models, also typically ignore data quantization in their
cost functions. However, examples where quantization plays a crucial
role abound in applications (see Figure~\ref{fig:regions}): think of 
honeycomb electrode detectors~\cite{Zhang2018}~in high-energy 
physics, the sinogram-domain quantization induced by the geometry of 
positron emission tomography (PET) scanners, the pixelized detectors 
in computed tomography~\cite{Danielsson2021}, or CCD and CMOS 
photodetectors in biological microscopy~\cite{Koklu2012}. 
Furthermore, we expect that Bayesian image reconstruction algorithms  will benefit even further from 
treating quantization explicitly because they use the entire 
posterior distribution to make inferences. This is in contrast to 
using only the mean or the mode, which are much more easily 
preserved between continuous and quantized data.

Admittedly, incorporating quantization into models without
some guarantee of likelihood logconcavity precludes many
theoretical guarantees.
Examples include tractable maximum likelihood (ML) and MAP
estimates, connected and convex credible and confidence sets, as well as large-sample
normal approximations that lead to usable hypothesis
testing~\cite{Pratt1981}. 
These are automatic for most common noise
models (e.g., Gaussian or exponential) in the continuous-data setting.
To recover these properties, most existing works studying quantization
resort to simplifying assumptions such as
Gaussian data and 1-bit quantization \cite{Li2017,Ren2019,Khobahi2019}, or
optimistically assume logconcavity \cite{Wen2016,Gao2018,Stein2018}.
In contrast, we prove that the logconcavity of the quantized-data likelihood follows from
the same assumption made on its continuous-data 
counterpart: logconcavity of the noise distribution. The result holds for
convex quantizers, which include not only all the examples mentioned so
far, but also an overwhelming majority of applications. This means that the corresponding likelihood-based methods automatically enjoy convergence guarantees. Therefore, more accurate models can be readily adopted for a plethora of detector models. We expect that the convenience of this result will
encourage researchers to more often include quantized models and thus
better capture the nature of measurement devices. Similarly, another
aim of this letter is to publicize and exploit a series of techniques that
are part of the folklore of the statistics literature but are yet to be made known
to the signal processing community~\cite{Burridge1982}. For
example, while not entirely new to our
community~\cite{Boyd2004,Conti2009,Msechu2012},  Pr\'{e}kopa's results
\cite{Prekopa1973} on logconcavity are key to this work and have not previously
been harnessed in their full generality.

To the best of our knowledge, our analysis is the most general treatment of
likelihood logconcavity for quantized data yet. Our study also accounts for
the scale parameter, which has only been considered before in a
specific application in~\cite{Ren2019}. Furthermore, we appear to be
the first to do the analysis for a generic vector quantizer $Q$, which is more general
than simple combinations of independent ADCs.

In Section~\ref{sec:cont}, we introduce the topic of likelihood logconcavity
by studying it for continuous data models. In Section~\ref{sec:proposed}, we present
our main result: quantizers with convex quantization regions (convex quantizers) yield
logconcave likelihoods when the underlying noise model has a logconcave 
pdf. In Section~\ref{sec:proofs} and the Appendix, we provide the proof for this
statement, as well as supporting mathematical results.

%Just examples, could look for others:
%For PET (maybe):
%https://link.springer.com/article/10.1007/s00259-003-1266-2
%For biological miscroscopy:
%https://ieeexplore.ieee.org/abstract/document/6271688?casa_token=m4zClEJNi3UAAAAA:noEShOUaqsy0eqjoUAUuUHx0frk4ZMxB7OeBcNZEUA5mJ3CL36n9ZhExga2Elp9vcqk0cls6rQ

    \section{Logconcavity for Continuous Data}
\label{sec:cont}

    Our central claim is that for both continuous and quantized data,
    likelihood logconcavity follows from the logconcavity of the noise pdf.
    This result is known for continuous data. We present it in
    Proposition~\ref{th:cont}. 

    We start by introducing the context and notation. Consider a data vector $\contdata\in\reals^n$ modeled as
    \begin{IEEEeqnarray}{c}
        \contdata = \scapar^{-1} \left( \obsmat \locpar + \noise \right)\,,
        \label{eq:LM}
    \end{IEEEeqnarray}
    where $\scapar\in\PDreals{n}$, $\obsmat\in\matricesreals{n,m}$, and $\locpar\in\reals^m$,
    with $\PDreals{n}$ representing the set of $n\times n$ real positive-definite matrices and
    $\matricesreals{n,m}$ the space of $n\times m$ real matrices. Additionally,
    $\noise\in\reals^n$ is a random noise vector $\noise\sim\pdf{\noise}{\noise}$ drawn from the 
    logconcave pdf $\pdf{\noise}{\cdot}$, \emph{i.e.}, for any $\alpha \in [0,1]$ and any two 
    $\noise_0,\noise_1\in\reals^n$,
    \begin{IEEEeqnarray}{rl} \label{eq:logconcavity_pdf}
        \pdf{\noise}{\noise_\alpha} \geq
        \pdf{\noise}{\noise_1}^\alpha \pdf{\noise}{\noise_0}^{(1-\alpha)}\!\!.
    \end{IEEEeqnarray}
    Here, $\noise_\alpha \coloneqq \alpha \noise_1 + (1-\alpha) \noise_0$ is a convex combination between $\noise_0$ and $\noise_1$.
    Throughout the paper, we will resort to this notation for convex
    combinations for simplicity of exposition. The case $\scapar = \id{n}$ in
    \eqref{eq:LM} corresponds to a usual linear model formulation, where $\obsmat$ is the observation matrix. In
    statistics, $\locpar$ and $\scapar$ in \eqref{eq:LM} are known as the
    location and scale parameters of the distribution family defined by
    \eqref{eq:LM}. For example, if the noise comes from a standard
    multivariate normal distribution $\noise\sim\normal{\mathbf{0}}{\id{n}}$, we have
    that $\contdata \sim \normal{\scapar^{-1}\obsmat\locpar}{\scapar^{-2}}$.
    
    \begin{remark}[Parametrization of Location-Scale Statistical Models]
        Although the location-scale parametrization in \eqref{eq:LM} is less
        familiar than, say, the mean and covariance matrices in multivariate
        Gaussian models, this choice is essentially related to
        the study of logconcavity. In fact, to show that joint estimators of $\boldsymbol \mu$ and $\boldsymbol \Sigma$ in a
        $\normal{\boldsymbol\mu}{\boldsymbol\Sigma}$ model are tractable, one needs to first
        reparametrize the problem in terms of $\locpar$ and $\scapar$.
    \end{remark}
    
    For each vector $\contdata\in\reals^n$, the likelihood
    is a function of $\locpar$ and $\scapar$ such that
    $\likelihood{\cdot,\cdot}{\contdata}\colon\reals^m\times\PDreals{n}
    \to\nnreals$ and
    \begin{equation} \label{eq:likelihood}
        \likelihood{\locpar,\scapar}{\contdata} \coloneqq
        \pdf{\contdata;\locpar,\scapar}{\contdata} = 
        \pdf{\noise}{\scapar \contdata - \obsmat \locpar}.
    \end{equation}
    Here, we used \eqref{eq:LM} to write $\likelihood{\locpar,\scapar}{\contdata}$ in terms of  $\pdf{\noise}{\cdot}$, and
    $ \pdf{\contdata;\locpar,\scapar}{\contdata}$ denotes the pdf
    	 of $\contdata$ for some given values of $\locpar$ and $\scapar$.
    The following result sets the stage for our study.

    \begin{proposition}[Logconcave Noise Generates Logconcave Likelihoods]
    \label{th:cont}
        Consider a sample $\contdata\in\reals^n$ drawn from model
        \eqref{eq:LM} and assume that the pdf of the noise,
        $\pdf{\noise}{\cdot}$, is logconcave. Then,
        $\likelihood{\locpar,\scapar}{\contdata}$ is jointly logconcave in
        $\locpar$ and $\scapar$.
    \end{proposition}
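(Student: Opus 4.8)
The plan is to recognize that the likelihood in \eqref{eq:likelihood} is the composition of the logconcave noise pdf $\pdf{\noise}{\cdot}$ with the map $(\locpar,\scapar)\mapsto\scapar\contdata-\obsmat\locpar$, which, for fixed data $\contdata$, is affine in the joint parameter $(\locpar,\scapar)$. Since composing a logconcave function with an affine map preserves logconcavity, the statement should follow directly from the hypothesis on $\pdf{\noise}{\cdot}$. First I would verify that the parameter domain $\reals^m\times\PDreals{n}$ is convex: it is the product of $\reals^m$ with the convex cone $\PDreals{n}$ of positive-definite matrices, so for any $\alpha\in[0,1]$ and any two admissible points $(\locpar_0,\scapar_0)$ and $(\locpar_1,\scapar_1)$, the combination $(\locpar_\alpha,\scapar_\alpha)$ is again admissible because $\scapar_\alpha=\alpha\scapar_1+(1-\alpha)\scapar_0$ stays positive-definite. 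This is what makes the defining inequality \eqref{eq:logconcavity_pdf} meaningful over the whole domain.

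The crux is a one-line linearity computation. Fixing $\contdata$, I would set $\noise_i\coloneqq\scapar_i\contdata-\obsmat\locpar_i$ so that $\likelihood{\locpar_i,\scapar_i}{\contdata}=\pdf{\noise}{\noise_i}$ by \eqref{eq:likelihood}. Because $\scapar\mapsto\scapar\contdata$ is linear in the entries of $\scapar$ and $\locpar\mapsto\obsmat\locpar$ is linear in $\locpar$, the argument evaluated at the convex combination of parameters equals the convex combination of the arguments,
\begin{IEEEeqnarray*}{c}
\scapar_\alpha\contdata-\obsmat\locpar_\alpha=\alpha\noise_1+(1-\alpha)\noise_0=\noise_\alpha.
\end{IEEEeqnarray*}

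Chaining this identity with the logconcavity of the noise from \eqref{eq:logconcavity_pdf} then finishes the argument:
\begin{IEEEeqnarray*}{rl}
\likelihood{\locpar_\alpha,\scapar_\alpha}{\contdata}=\pdf{\noise}{\noise_\alpha}&\geq\pdf{\noise}{\noise_1}^\alpha\pdf{\noise}{\noise_0}^{(1-\alpha)}\\
&=\likelihood{\locpar_1,\scapar_1}{\contdata}^\alpha\likelihood{\locpar_0,\scapar_0}{\contdata}^{(1-\alpha)},
\end{IEEEeqnarray*}
which is precisely joint logconcavity. I do not expect a substantial obstacle here; the proof is short. The one point deserving care—and the only place where intuition could slip—is remembering that the \emph{scale} parameter $\scapar$, not just the location $\locpar$, enters the argument linearly, so that the full map stays affine in $(\locpar,\scapar)$ and the domain $\PDreals{n}$ remains stable under convex combinations. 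Everything else is bookkeeping around the definition of logconcavity.
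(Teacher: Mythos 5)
Your proposal is correct and follows essentially the same route as the paper: both set $\noise_i = \scapar_i\contdata - \obsmat\locpar_i$, exploit the linearity of $(\locpar,\scapar)\mapsto\scapar\contdata-\obsmat\locpar$ so that $\scapar_\alpha\contdata-\obsmat\locpar_\alpha=\noise_\alpha$, and invoke \eqref{eq:logconcavity_pdf} directly. Your extra remark on the convexity of $\reals^m\times\PDreals{n}$ is a sound (if implicit in the paper) bookkeeping point, not a difference in method.
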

    \begin{proof}
        Using \eqref{eq:likelihood}, we obtain
        \begin{IEEEeqnarray*}{rl}
            \likelihood{\locpar_\alpha,\scapar_\alpha}{\contdata} &=
            \pdf{\noise}{\scapar_\alpha \contdata - \obsmat \locpar_\alpha}\\
            &\geq \pdf{\noise}{
                    \left[\scapar_1\contdata - \obsmat\locpar_1\right]
                              }^\alpha
            \pdf{\noise}{
                    \left[\scapar_0 \contdata - \obsmat\locpar_0\right]
                        }^{1-\alpha},
        \end{IEEEeqnarray*}
        which is the desired result. Here, we used \eqref{eq:logconcavity_pdf}
        with
            $\noise_i = \scapar_{i} \contdata - \obsmat \locpar_{i}$ for 
            $i\in\lbrace0,1\rbrace$.
    \end{proof}

\section{Logconcavity for Quantized Data}\label{sec:proposed}
	It turns out that a statement similar to Proposition~\ref{th:cont} can be made
    for quantized observations $z$ modeled as
    \begin{IEEEeqnarray}{c} \label{eq:QLM}
        z = \quantizer\left(\contdata\right) =
            \quantizer\left(
                            \scapar^{-1}\left(\obsmat\locpar+\noise\right)
                      \right)\!.
    \end{IEEEeqnarray}
	Here, we take the most general view of quantization: we define a quantizer
	as a mapping $\quantizer\colon\reals^n\to\QuanSet$, where $\QuanSet$ is
    a countable set.
    Such a quantizer does not generally treat each dimension of $\contdata$
    independently. The results we
    present hereafter apply to the subclass of quantizers that have convex quantization regions. We call these \emph{convex quantizers} $\quantizer$, and they fulfill that $\invquantizer(z)$ is a convex set $\forall z\in\QuanSet$
    (see Figure~\ref{fig:regions} for examples).
    Among others introduced beforehand, these include
    quantizers composed of independent (monotonic) ADCs for each dimension. Indeed, in this case, for any $z\in\QuanSet$ and all $j\in\lbrace 1,2,\dots,n\rbrace$, there are  $a_j(z), b_j(z)\in\overline{\reals}$ such that $\invquantizer(z) = \prod_{j=1}^n [a_j(z),b_j(z)]$, which is trivially convex. Here, $\overline{\reals}=\reals \cup \lbrace -\infty, \infty\rbrace$.
    Our main result below parallels Proposition~\ref{th:cont} for quantized data obtained from convex quantizers.

    \begin{theorem}[Logconcave Noise and Convex Quantizers Generate Logconcave Likelihoods]\label{th:result}
        Consider a sample $z\in\QuanSet$ drawn from model \eqref{eq:QLM}. Assume that $\quantizer$ is a convex quantizer and that the pdf $\pdf{\noise}{\cdot}$ of the noise is logconcave. Then,
        \renewcommand{\theenumi}{\alph{enumi}}
        \begin{enumerate}
            \item for a given scale parameter $\scapar_0\in\PDreals{n}$, the likelihood $\likelihood{\locpar,\scapar_0}{z}$ is
                logconcave with respect to $\locpar$,
            \item for scale parameters of the form $\scapar=\scalarscapar \id{n}$ with
                $\scalarscapar>0$, the likelihood $\likelihood{\locpar,\scalarscapar \id{n}}{z}$
                is jointly logconcave with respect to $\locpar$ and $\scalarscapar$,
            \item for diagonal positive-definite scale parameters $\scapar=\diagscapar$, the likelihood $\likelihood{\locpar,\diagscapar}{z}$ is jointly logconcave with respect to
                $\locpar$ and $\diagscapar$ if $\quantizer$ is composed of independent ADCs for each dimension.
        \end{enumerate}
        \renewcommand{\theenumi}{enumi}
    \end{theorem}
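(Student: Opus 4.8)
The plan is to express the quantized-data likelihood as the probability that the continuous data $\contdata$ lands in the convex region $\invquantizer(z)$, and then write that probability as an integral of a jointly logconcave integrand over the noise, so that Pr\'ekopa's theorem delivers logconcavity in the parameters after marginalizing out $\noise$. Concretely, from \eqref{eq:QLM} the likelihood is $\likelihood{\locpar,\scapar}{z}=\int_{\reals^n}\mathbf{1}\!\left[\scapar^{-1}(\obsmat\locpar+\noise)\in\invquantizer(z)\right]\pdf{\noise}{\noise}\,\mathrm{d}\noise$. The two ingredients I would rely on are: (i) the indicator of a convex set is logconcave, and (ii) integrating a jointly logconcave function over a subset of its variables yields a logconcave function of the remaining ones (Pr\'ekopa). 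Since $\pdf{\noise}{\cdot}$ is logconcave by assumption and is jointly logconcave as a function that is constant in the parameters, the whole argument reduces to showing that the set $\left\{(\locpar,\scapar,\noise):\scapar^{-1}(\obsmat\locpar+\noise)\in\invquantizer(z)\right\}$ is convex in the relevant variables. Its indicator is then jointly logconcave, the product with $\pdf{\noise}{\noise}$ is jointly logconcave, and (ii) closes each case.

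For part (a), with $\scapar=\scapar_0$ fixed, the map $(\locpar,\noise)\mapsto\scapar_0^{-1}(\obsmat\locpar+\noise)$ is affine, so the preimage of the convex set $\invquantizer(z)$ is convex in $(\locpar,\noise)$; marginalizing $\noise$ then gives logconcavity in $\locpar$. The crux, and the step I expect to be the main obstacle, is the scale parameter, because $\scapar^{-1}(\obsmat\locpar+\noise)$ is no longer affine once $\scapar$ varies. I would resolve this by representing the convex region through its (possibly infinitely many) supporting halfspaces $\invquantizer(z)=\{v:\langle a_k,v\rangle\le b_k\}$ and then clearing the positive scale. For part (b), with $\scapar=\scalarscapar\id{n}$ and $\scalarscapar>0$, each constraint $\langle a_k,\scalarscapar^{-1}(\obsmat\locpar+\noise)\rangle\le b_k$ multiplies through to $\langle a_k,\obsmat\locpar+\noise\rangle\le b_k\,\scalarscapar$, which is affine in $(\locpar,\scalarscapar,\noise)$; the intersection over $k$ is therefore convex, and Pr\'ekopa yields joint logconcavity in $\locpar$ and $\scalarscapar$. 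Equivalently, $(\scalarscapar,u)\mapsto u/\scalarscapar$ is a perspective map, whose preimages of convex sets are convex.

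For part (c), with $\scapar=\diagscapar$ diagonal, a single multiplier no longer clears all constraints at once, since distinct coordinates carry distinct scales $\lambda_j$ and a general halfspace couples them. This is exactly why I would invoke the independent-ADC hypothesis, under which $\invquantizer(z)=\prod_{j=1}^n[a_j,b_j]$ and every constraint involves a single coordinate. Each box inequality $a_j\le\lambda_j^{-1}(\obsmat\locpar+\noise)_j\le b_j$ then clears to the affine pair $\lambda_j a_j\le(\obsmat\locpar+\noise)_j\le\lambda_j b_j$ (with infinite endpoints making the corresponding side vacuous), so the region is convex in $(\locpar,\diagscapar,\noise)$ and marginalizing $\noise$ gives joint logconcavity in $\locpar$ and $\diagscapar$. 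The only care needed throughout is the measure-theoretic hygiene of Pr\'ekopa's marginalization and the convention that the zero function (empty region, giving likelihood $0$) is logconcave, neither of which poses a real difficulty.
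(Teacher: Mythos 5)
Your proposal is correct, but it takes a genuinely different route from the paper. The paper works with the set-valued form of Pr\'ekopa's theorem: it writes the likelihood as $\prob{\noise}\left[\ToPars{z}{\locpar}{\scapar}\right]$ for the noise region $\ToPars{z}{\locpar}{\scapar}$ and must then prove, via three dedicated lemmas, that the region at the interpolated parameters equals the weighted Minkowski sum $\alpha\setA_1+(1-\alpha)\setA_0$ of the endpoint regions --- with a separate appendix lemma showing that diagonal-matrix combinations of two points fill out exactly a box, which is what makes case (c) go through. You instead use the marginalization form of Pr\'ekopa's theorem (the integral of a jointly logconcave function over a subset of its variables is logconcave in the rest), which reduces everything to the joint convexity of the lifted feasibility set in $(\locpar,\scapar,\noise)$; the scale parameter is handled by clearing the positive scalar(s) through the constraints, i.e., the perspective-map trick. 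The two forms of Pr\'ekopa's theorem are essentially equivalent (both follow from Pr\'ekopa--Leindler), and the structural facts you exploit --- a common scalar clears any convex region, while per-coordinate scalars clear only separable box regions --- are exactly the facts isolated in the paper's Lemma~\ref{lem:Minkowski_in_convex_parameters} and Lemma~\ref{lem:diag_box}. Your route is shorter and avoids the explicit set-equality bookkeeping; the paper's route makes more visible why the argument breaks for a general positive-definite scale (Lemma~\ref{lem:pd_ball} shows the Minkowski combination inflates to a ball). One minor point of hygiene on your side: quantization cells need not be closed, so the supporting-halfspace representation describes only the closure of $\invquantizer(z)$; either apply the perspective-map argument directly to the (possibly non-closed) convex set, or note that the discrepancy lives on the boundary of a convex set, which is Lebesgue-null and hence invisible to the integral against the noise pdf.
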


	The rest of this letter is dedicated to proving Theorem~\ref{th:result}. 
		The proof can be found at the end of the letter.

    \begin{figure}
        \centering
        \input{figs/quantization_regions}

        \vspace{-10pt}
        
        \caption{a) Convex regions laid down according to a honeycomb pattern, as done in hexagonal electrode detectors. b) Two quantization regions of a convex quantizer formed by independent ADCs. c) Example quantization regions of a PET scanner in the sinogram domain.
            \label{fig:regions}}
    \end{figure}

    \section{Mathematical Study of Logconcavity Under Quantization}
    \label{sec:proofs}
            We start by constructing the likelihood of the location and scale parameters for a given $z\in\QuanSet$. By definition, observing $z$ under model \eqref{eq:QLM} implies that $\contdata\in \invquantizer\left(z\right)$. In turn, this implies that the random noise $\noise$ is within a specific region. For specific values of $\locpar$ and $\scapar$, and for each $z\in\QuanSet$, we define
    \begin{IEEEeqnarray}{c} \label{eq:topars}
        \ToPars{z}{\locpar}{\scapar} \coloneqq \left\lbrace \noise\in\reals^n :
        \contdata \in\invquantizer\left(z\right)\right\rbrace
    \end{IEEEeqnarray}
    as a shorthand for this region. We then write the likelihood as
    \begin{IEEEeqnarray}{c}\label{eq:quantized_likelihood}
        \likelihood{\locpar,\scapar}{z} =
        \prob{\noise}\left[ \ToPars{z}{\locpar}{\scapar} \right],
    \end{IEEEeqnarray}
    where $\prob{\noise}$ is the probability law in $\reals^n$ given
    by the pdf $\pdf{\noise}{\cdot}$.

    Our strategy to prove Theorem~\ref{th:result} combines this compact expression of the quantized-data likelihood with a key result by Pr\'{e}kopa, which we include here for completeness.

    \begin{theorem}[Pr\'{e}kopa's Theorem {\cite[p.~2, Theorem~2]{Prekopa1973}}]
    \label{th:prekopa}
        Let $\noise$ be a continuous random vector in $\reals^n$ with
        logconcave pdf $\pdf{\noise}{\cdot}$ in the sense of \eqref{eq:logconcavity_pdf}.
        Let $\prob{\noise}:2^{\reals^n}\rightarrow[0,1]$
        be the probability measure induced by $\noise$ on $\reals^n$, where
        $2^{\reals^n}$ denotes the set of all possible sets in $\reals^n$.
        Then, for any two
        convex sets $\setA_0,\setA_1\subseteq \reals^n$
        we have that
        \begin{IEEEeqnarray}{c} \label{eq:prekopa}
            \prob{\noise}\left[\setA_\alpha\right] \geq
            \prob{\noise}\left[\setA_1\right]^\alpha
            \prob{\noise}\left[\setA_0\right]^{(1-\alpha)},
        \end{IEEEeqnarray}
        where $\setA_\alpha$ is the Minkowski sum $\alpha \setA_1 + (1-\alpha)\setA_0$.
    \end{theorem}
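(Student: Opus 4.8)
The plan is to reduce this measure-theoretic statement to its functional counterpart, the \emph{Pr\'ekopa--Leindler inequality}: if $f,g,h\colon\reals^n\to\nnreals$ are measurable and satisfy $h(\noise_\alpha)\geq f(\noise_1)^\alpha g(\noise_0)^{1-\alpha}$ for every $\noise_0,\noise_1\in\reals^n$, then $\int_{\reals^n}h \geq \left(\int_{\reals^n}f\right)^\alpha\left(\int_{\reals^n}g\right)^{1-\alpha}$. Condition \eqref{eq:logconcavity_pdf} is precisely this pointwise hypothesis for $h=f=g=\pdf{\noise}{\cdot}$, so the task reduces to encoding the set constraints into the three functions.

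To do so I would weight the density by the indicators of the sets, setting $f=\pdf{\noise}{\cdot}\,\mathbf{1}_{\setA_1}$, $g=\pdf{\noise}{\cdot}\,\mathbf{1}_{\setA_0}$, and $h=\pdf{\noise}{\cdot}\,\mathbf{1}_{\setA_\alpha}$. The Pr\'ekopa--Leindler hypothesis then holds: if $\noise_1\notin\setA_1$ or $\noise_0\notin\setA_0$ the right-hand side $f(\noise_1)^\alpha g(\noise_0)^{1-\alpha}$ vanishes and nothing is to be checked; otherwise $\noise_1\in\setA_1$ and $\noise_0\in\setA_0$ force $\noise_\alpha\in\alpha\setA_1+(1-\alpha)\setA_0=\setA_\alpha$ by the very definition of the Minkowski combination, whence $\mathbf{1}_{\setA_\alpha}(\noise_\alpha)=1$ and logconcavity \eqref{eq:logconcavity_pdf} gives $h(\noise_\alpha)=\pdf{\noise}{\noise_\alpha}\geq\pdf{\noise}{\noise_1}^\alpha\pdf{\noise}{\noise_0}^{1-\alpha}=f(\noise_1)^\alpha g(\noise_0)^{1-\alpha}$. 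Convexity of $\setA_0,\setA_1$ is what guarantees that $\setA_\alpha$ is convex, hence Lebesgue measurable, so all three functions are measurable. Since $\int_{\reals^n}h=\prob{\noise}[\setA_\alpha]$, $\int_{\reals^n}f=\prob{\noise}[\setA_1]$, and $\int_{\reals^n}g=\prob{\noise}[\setA_0]$, the conclusion of Pr\'ekopa--Leindler is exactly \eqref{eq:prekopa}.

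The substance therefore lies in establishing the Pr\'ekopa--Leindler inequality itself, which I would prove by induction on the dimension $n$, the crux being the one-dimensional base case. Assuming $\int f,\int g>0$ (else the bound is trivial), I would introduce the increasing transport maps $u,v\colon(0,1)\to\reals$ defined by $\int_{-\infty}^{u(t)}f = t\int f$ and $\int_{-\infty}^{v(t)}g=t\int g$, so that $u'(t)\,f(u(t))=\int f$ and $v'(t)\,g(v(t))=\int g$ almost everywhere. Setting $w(t)=\alpha u(t)+(1-\alpha)v(t)$, the weighted arithmetic--geometric mean inequality applied to $w'=\alpha u'+(1-\alpha)v'$, combined with the pointwise hypothesis $h(w(t))\geq f(u(t))^\alpha g(v(t))^{1-\alpha}$, yields after the change of variables $z=w(t)$ exactly $\int h\geq\left(\int f\right)^\alpha\left(\int g\right)^{1-\alpha}$.

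For the inductive step I would fix the last coordinate and integrate over the first $n-1$. Writing $F(s)=\int_{\reals^{n-1}}f(\cdot,s)$, and analogously $G$ and $H$, the hypothesis on $f,g,h$ restricts to the $(n-1)$-dimensional hypothesis on each pair of sections, so the inductive hypothesis gives $H(\alpha s_1+(1-\alpha)s_0)\geq F(s_1)^\alpha G(s_0)^{1-\alpha}$; the one-dimensional case applied to $F,G,H$ then closes the induction through Fubini. The main obstacles I anticipate are technical rather than conceptual: justifying the regularity of the transport maps and the monotone change of variables in the base case, and verifying the measurability of $H$ together with the applicability of Fubini in the inductive step. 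Once the base case is secured, the reduction of the first two paragraphs delivers \eqref{eq:prekopa} at once.
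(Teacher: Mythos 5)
Your proposal is correct, but it is worth noting up front that the paper itself does not prove Theorem~\ref{th:prekopa} at all: the statement is imported verbatim from Pr\'ekopa's 1973 paper (hence the citation to p.~2, Theorem~2) and is used as a black box in the proof of Theorem~\ref{th:result}. What you have reconstructed is essentially the standard modern proof of that cited result. Your reduction to the Pr\'ekopa--Leindler inequality by weighting the density with indicators, $f=\pdf{\noise}{\cdot}\,\mathbf{1}_{\setA_1}$, $g=\pdf{\noise}{\cdot}\,\mathbf{1}_{\setA_0}$, $h=\pdf{\noise}{\cdot}\,\mathbf{1}_{\setA_\alpha}$, is exactly right: if $\noise_1\in\setA_1$ and $\noise_0\in\setA_0$ then $\noise_\alpha\in\setA_\alpha$ by the very definition of the Minkowski combination and \eqref{eq:logconcavity_pdf} supplies the pointwise inequality, while otherwise the right-hand side vanishes; the integrals then reproduce \eqref{eq:prekopa}. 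Your observation that convexity of $\setA_0,\setA_1$ is used only to guarantee measurability of $\setA_\alpha$ (a Minkowski sum of measurable sets need not be measurable, whereas a convex set always is) is a sharper reading than the statement itself advertises, and it explains why the theorem is phrased for convex sets even though the inequality mechanism is more general. The induction on dimension with the one-dimensional transport parametrization ($u'(t)\,f(u(t))=\int f$, the AM--GM bound on $w'=\alpha u'+(1-\alpha)v'$, and the monotone change of variables) is the classical base-case argument; the technical points you flag --- regularity of $u,v$ where the distribution functions are flat, measurability of the marginals, and Fubini --- are real but routine and are handled in standard treatments. In short, where the paper buys brevity by citing, you buy self-containedness: your argument, once the one-dimensional technicalities are written out, is a complete and correct proof of the result the paper takes for granted, and it would serve a reader who wants the letter to stand on its own.
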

    Here, the weighted Minkowski sum $\setA_\alpha$ (illustrated in Figure~\ref{fig:minkowski} for $\alpha=1/2$) is the set of all possible
    combinations $\noise_\alpha = \alpha \noise_1+(1-\alpha)\noise_0$ in which $\noise_1\in\setA_1$, $\noise_0\in\setA_0$, for
    $\alpha\in[0,1]$. The Minkowski sum preserves convexity: if $\setA_1$ and $\setA_2$ are convex, then $\setA_\alpha$ is also convex.

    \begin{figure}
        \centering
        % Minkowski sum of sets
\begin{tikzpicture}[scale=1.5]
    % Plot axis
    \draw [->] (-0.1,0) -- (3.8,0) node [anchor = west] {$\noise[1]$};
    \draw [->] (0,-0.1) -- (0,2) node [anchor = south] {$\noise[2]$};
    % Plot sets and Minkowski sum
    \draw[orange,opacity=0.5] plot[only marks,mark=+,mark size = 1] file{figs/square_minkowski.txt};
    \draw[purple,opacity=0.5] plot[only marks,mark=+,mark size = 1] file{figs/rotated_minkowski.txt};
    \draw[gray,opacity=0.5] plot[only marks,mark=+,mark size = 1] file{figs/minkowski_average.txt};
\end{tikzpicture}
        \caption{Example of the
            Minkowski sum of two sets scaled by $\alpha=1/2$ (in the center).
            Each set is represented here by random elements within it.
            \label{fig:minkowski}}
    \end{figure}

    To prove Theorem~\ref{th:result}, we identify the sets
    $\ToPars{z}{\locpar_\alpha}{\scapar_\alpha}$ in \eqref{eq:topars}
    with the sets $\setA_\alpha$ in Theorem~\ref{th:prekopa}. Then,
    \eqref{eq:prekopa} becomes the desired logconcavity statement.
    The technical conditions of Theorem~\ref{th:result}, therefore, only ensure that the convex combination of location and scale
    parameters leads to the same set $\ToPars{z}{\locpar_\alpha}{\scapar_\alpha}$ as the Minkowski sum
    of the corresponding scaled sets $\ToPars{z}{\locpar_i}{\scapar_i}$ for $i\in\lbrace 0,1\rbrace$.
    We start by verifying convexity in the extreme cases $\alpha\in\lbrace 0,1\rbrace$.

    \begin{lemma}[Convex Quantizers Lead to Convex Noise Regions]\label{lem:convex}
        For any $z\in\QuanSet$, $\locpar\in\reals^m$, and 
        $\scapar\in\PDreals{n}$, 
        $\ToPars{z}{\locpar}{\scapar}$ is convex
        if and only if $\invquantizer(z)$ is convex.
    \end{lemma}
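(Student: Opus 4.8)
The plan is to recognize $\ToPars{z}{\locpar}{\scapar}$ as the image of $\invquantizer(z)$ under an invertible affine transformation, and then invoke the elementary fact that such transformations preserve convexity in both directions. Concretely, from model \eqref{eq:QLM} the condition $\contdata\in\invquantizer(z)$ that defines \eqref{eq:topars} is equivalent to $\noise=\scapar\contdata-\obsmat\locpar$ ranging over the affine image of $\invquantizer(z)$. First I would make this explicit by introducing the affine map $T\colon\reals^n\to\reals^n$ given by $T(\contdata)=\scapar\contdata-\obsmat\locpar$, so that $\ToPars{z}{\locpar}{\scapar}=T(\invquantizer(z))$.

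The crucial observation is that $T$ is a bijection: since $\scapar\in\PDreals{n}$ is positive-definite it is invertible, and the inverse map $T^{-1}(\noise)=\scapar^{-1}(\obsmat\locpar+\noise)$ is again affine. Indeed, $T^{-1}$ is exactly the map $\noise\mapsto\contdata$ from \eqref{eq:QLM}. I would then use the standard facts that affine maps send convex sets to convex sets and that preimages of convex sets under affine maps are convex.

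Combining these yields both implications at once. If $\invquantizer(z)$ is convex, then its image $T(\invquantizer(z))=\ToPars{z}{\locpar}{\scapar}$ is convex. Conversely, if $\ToPars{z}{\locpar}{\scapar}$ is convex, then applying the affine map $T^{-1}$ gives $T^{-1}(\ToPars{z}{\locpar}{\scapar})=\invquantizer(z)$, which is therefore convex. Because $T$ depends only on the fixed $\locpar$ and $\scapar$ and not on the noise realization, the equivalence holds for every $z\in\QuanSet$.

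Since the argument rests only on convexity being preserved by invertible affine maps, there is no serious obstacle; the one point demanding care is verifying that $T$ is genuinely invertible, which is precisely where the positive-definiteness of $\scapar$ (rather than mere symmetry) is used. A secondary bookkeeping point is to keep the direction of the transformation straight — the region $\ToPars{z}{\locpar}{\scapar}$ lives in noise space while $\invquantizer(z)$ lives in data space — so that $T$ and its inverse are applied on the correct side.
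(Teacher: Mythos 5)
Your proof is correct. For the forward direction it is essentially the paper's argument in disguise: the paper takes two points of $\ToPars{z}{\locpar}{\scapar}$, pulls them back to $\contdata_0,\contdata_1\in\invquantizer(z)$, and pushes the convex combination forward through $\contdata\mapsto\scapar\contdata-\obsmat\locpar$, which is exactly the hands-on verification that an affine image of a convex set is convex. Where you genuinely diverge is the converse. The paper disposes of it with the single observation that $\ToPars{z}{\zero{}}{\id{n}}=\invquantizer(z)$, i.e., it exhibits one choice of parameters for which the noise region \emph{is} the quantization region; this suffices if the lemma is read as ``convexity for all $(\locpar,\scapar)$ is equivalent to convexity of $\invquantizer(z)$.'' You instead note that $T(\contdata)=\scapar\contdata-\obsmat\locpar$ is an invertible affine bijection (using positive-definiteness of $\scapar$) and apply $T^{-1}$, which proves the stronger pointwise statement: for each \emph{fixed} pair $(\locpar,\scapar)$, convexity of $\ToPars{z}{\locpar}{\scapar}$ already forces convexity of $\invquantizer(z)$. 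Your version is therefore slightly more uniform and matches the literal ``for any $\locpar$, $\scapar$'' quantification of the lemma more faithfully, at the cost of invoking (the elementary fact of) preservation of convexity under affine preimages rather than a one-line instantiation. Either route is fine for the role the lemma plays later, since the downstream results only need the forward implication.
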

    \begin{proof}
        Let $\noise_0,\noise_1\in
        \ToPars{z}{\locpar}{\scapar}$. Then,
        $\noise_i = \scapar \contdata_i -
        \obsmat \locpar$ for $\contdata_i \in
        \invquantizer(z)$ with $i\in\lbrace 0,1\rbrace$.
        Because $\invquantizer(z)$ is convex,
        $\contdata_\alpha\in\invquantizer(z)$, and therefore,
        $\noise_\alpha =
        \scapar \contdata_\alpha - \obsmat \locpar_\alpha \in \ToPars{z}{\locpar}{\scapar}$.
        In conclusion, if $\invquantizer(z)$ is convex,
        $\ToPars{z}{\locpar}{\scapar}$ is convex.

        For the converse, simply consider that
        $\ToPars{z}{\zero{}}{\id{n}} = \invquantizer(z)$.
    \end{proof}
    
    This allows us to set 
    \begin{equation}  \label{eq:extremes}
        \setA_i \coloneqq \ToPars{z}{\locpar_i}{\scapar_i}\mbox{ for }i\in\lbrace0,1\rbrace,
    \end{equation}
    while fulfilling the conditions of Theorem~\ref{th:prekopa}.
    For the intermediate values $\alpha\in(0,1)$, we identify conditions under 
    which $\ToPars{z}{\locpar_\alpha}{\scapar_\alpha}$ is the Minkowski sum of
    $\alpha \setA_1$ and $(1-\alpha)\setA_0$. We start by showing that one of 
    the inclusions is always true.

    \begin{lemma}
    \label{lem:combination_parameters_in_Minkowski}
        Consider $\setA_0$ and $\setA_1$ as defined in \eqref{eq:extremes}. Then
        $\ToPars{z}{\locpar_\alpha}{\scapar_\alpha} \subseteq \setA_\alpha$.
    \end{lemma}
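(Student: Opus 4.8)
The plan is to prove this inclusion directly: take an arbitrary element of $\ToPars{z}{\locpar_\alpha}{\scapar_\alpha}$ and exhibit it as a legitimate weighted Minkowski combination of an element of $\setA_1$ with an element of $\setA_0$. First I would unwind the definition in \eqref{eq:topars}. A vector $\noise$ lies in $\ToPars{z}{\locpar_\alpha}{\scapar_\alpha}$ exactly when there is some $\contdata\in\invquantizer(z)$ with $\contdata = \scapar_\alpha^{-1}\left(\obsmat\locpar_\alpha+\noise\right)$, which by \eqref{eq:LM} is equivalent to $\noise = \scapar_\alpha\contdata - \obsmat\locpar_\alpha$ for that same $\contdata$.

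Second, I would substitute the convex-combination definitions $\scapar_\alpha = \alpha\scapar_1 + (1-\alpha)\scapar_0$ and $\locpar_\alpha = \alpha\locpar_1 + (1-\alpha)\locpar_0$ and exploit that, for a \emph{fixed} $\contdata$, the map $(\scapar,\locpar)\mapsto \scapar\contdata - \obsmat\locpar$ is affine. Distributing yields
\begin{equation*}
    \noise = \scapar_\alpha\contdata - \obsmat\locpar_\alpha = \alpha\left(\scapar_1\contdata - \obsmat\locpar_1\right) + (1-\alpha)\left(\scapar_0\contdata - \obsmat\locpar_0\right).
\end{equation*}
I would then set $\noise_i \coloneqq \scapar_i\contdata - \obsmat\locpar_i$ for $i\in\lbrace 0,1\rbrace$, using the \emph{same} $\contdata$ on both branches. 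Since this single $\contdata$ lies in $\invquantizer(z)$, each $\noise_i$ belongs to $\ToPars{z}{\locpar_i}{\scapar_i} = \setA_i$ by \eqref{eq:topars} and \eqref{eq:extremes}. Hence $\noise = \alpha\noise_1 + (1-\alpha)\noise_0$ with $\noise_1\in\setA_1$ and $\noise_0\in\setA_0$, which is precisely an element of the weighted Minkowski sum $\setA_\alpha$ as recalled after Theorem~\ref{th:prekopa}. This gives $\noise\in\setA_\alpha$ and completes the claim.

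The argument is essentially a one-line distributivity computation, so I expect no real obstacle here. The key structural fact that makes this inclusion free of charge is that the same $\contdata$ can be reused on both the $i=0$ and $i=1$ branches. The genuine difficulty, and the reason the technical hypotheses of Theorem~\ref{th:result} are needed, lies entirely in the \emph{reverse} inclusion $\setA_\alpha \subseteq \ToPars{z}{\locpar_\alpha}{\scapar_\alpha}$: there one starts from $\noise_1\in\setA_1$ and $\noise_0\in\setA_0$ arising from possibly \emph{distinct} points $\contdata_1,\contdata_0\in\invquantizer(z)$, and must produce a single $\contdata\in\invquantizer(z)$ realizing the averaged noise $\noise_\alpha$, which is not automatic once $\scapar$ is allowed to vary. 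This lemma deliberately isolates the half of the set equality that holds unconditionally.
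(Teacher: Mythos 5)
Your proof is correct and follows essentially the same argument as the paper's: unwind the definition to get a single $\contdata\in\invquantizer(z)$ with $\noise=\scapar_\alpha\contdata-\obsmat\locpar_\alpha$, distribute the convex combination over the affine map, and observe that $\noise_i=\scapar_i\contdata-\obsmat\locpar_i\in\setA_i$ using that same $\contdata$. Your closing remark about why the reverse inclusion is the genuinely hard direction also matches the paper's framing exactly.
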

    \begin{proof}
        Let $\noise \in\ToPars{z}{\locpar_\alpha}{\scapar_\alpha}$. Then, there is
        $\contdata\in\invquantizer(z)$ such that 
        $\noise = \scapar_\alpha \contdata - \obsmat\locpar_\alpha = \alpha\noise_1+ (1-\alpha) \noise_0$
        with $\noise_i = \scapar_i \contdata - \obsmat \locpar_i$ for $i\in\lbrace 0,1\rbrace$.
        By definition, $\noise_i\in\setA_i$. 
    \end{proof}

    To establish that $\setA_\alpha \subseteq \ToPars{z}{\locpar_\alpha}{\scapar_\alpha}$, we need results on the geometry of sets generated by normalized matrices that sum to $\id{}$ (see Lemma~\ref{lem:diag_box} in the Appendix).
    This is not true for generic scale parameters (see Lemma~\ref{lem:pd_ball}), but it holds under the restrictions of Theorem~\ref{th:result}.

    \begin{lemma}
    \label{lem:Minkowski_in_convex_parameters}
        Consider $\setA_0$ and $\setA_1$ as defined in \eqref{eq:extremes}. If, 
        \renewcommand{\theenumi}{\alph{enumi}}
        \begin{enumerate}
            \item $\scapar_1 = \scapar_0$, or,
            \item $\scapar_i = \scalarscapar_i \id{n}$ with $\scalarscapar_i>0$ for $i\in\lbrace 0,1\rbrace$, or,
            \item $\scapar_i = \diagscapar_i$ with $\diagscapar_i\in \diagPD{n}$ for $i\in\lbrace 0,1\rbrace$, with $\invquantizer(z)=\prod_{j=1}^n [a_j(z),b_j(z)]$ for $a_j(z),b_j(z)\in\overline{\reals}$ and for all $j\in\lbrace 1,2,\dots,n\rbrace$ and any $z\in\QuanSet$,
        \end{enumerate}
        \renewcommand{\theenumi}{enumi}
        then $\setA_\alpha \subseteq \ToPars{z}{\locpar_\alpha}{\scapar_\alpha}$.
        Here, $\diagPD{n}$ is the space of $n\times n$ non-negative diagonal matrices.
    \end{lemma}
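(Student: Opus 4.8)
The plan is to reduce all three cases to a single membership statement about the convex set $K \coloneqq \invquantizer(z)$. Following the proof of Lemma~\ref{lem:convex}, I would write $\setA_i = \{\scapar_i\contdata - \obsmat\locpar_i : \contdata \in K\}$ and likewise $\ToPars{z}{\locpar_\alpha}{\scapar_\alpha} = \{\scapar_\alpha\contdata - \obsmat\locpar_\alpha : \contdata \in K\}$. Taking an arbitrary $\noise \in \setA_\alpha$, we have $\noise = \alpha\noise_1 + (1-\alpha)\noise_0$ with $\noise_i = \scapar_i\contdata_i - \obsmat\locpar_i$ and $\contdata_i \in K$, so grouping the location terms gives $\noise = (\alpha\scapar_1\contdata_1 + (1-\alpha)\scapar_0\contdata_0) - \obsmat\locpar_\alpha$. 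The offset $-\obsmat\locpar_\alpha$ already matches the target, so it remains only to produce a $\contdata \in K$ with $\scapar_\alpha\contdata = \alpha\scapar_1\contdata_1 + (1-\alpha)\scapar_0\contdata_0$. Whenever $\scapar_\alpha$ is invertible this reads $\contdata = P_1\contdata_1 + P_0\contdata_0$ with $P_1 \coloneqq \scapar_\alpha^{-1}\alpha\scapar_1$ and $P_0 \coloneqq \scapar_\alpha^{-1}(1-\alpha)\scapar_0$, two matrices satisfying $P_1 + P_0 = \id{n}$---the \emph{normalized matrices summing to} $\id{n}$ setting of Lemma~\ref{lem:diag_box}.

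Cases (a) and (b) I would dispatch at once, because there $P_1$ and $P_0$ are scalar multiples of $\id{n}$. In case (a), with $\scapar_1 = \scapar_0$, they are $P_1 = \alpha\id{n}$ and $P_0 = (1-\alpha)\id{n}$, so $\contdata = \contdata_\alpha$ is a plain convex combination of points of $K$. In case (b), with $\scapar_i = \scalarscapar_i\id{n}$, they are $P_1 = (\alpha\scalarscapar_1/\scalarscapar_\alpha)\id{n}$ and $P_0 = ((1-\alpha)\scalarscapar_0/\scalarscapar_\alpha)\id{n}$, whose coefficients are non-negative and sum to one, so $\contdata$ is again a convex combination. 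Both conclusions use only the convexity of $K$ and hence hold for any convex quantizer.

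The hard part is case (c), where the diagonal scales apply a possibly different weight in each coordinate, so $\contdata$ is no longer a single convex combination of $\contdata_1$ and $\contdata_0$. Here $P_1$ and $P_0$ are non-negative diagonal matrices with $P_1 + P_0 = \id{n}$, and the $j$-th component reads $\contdata[j] = p_j\,\contdata_1[j] + (1-p_j)\,\contdata_0[j]$ with $p_j \in [0,1]$, a coordinatewise convex combination of $\contdata_1[j], \contdata_0[j] \in [a_j(z), b_j(z)]$, hence $\contdata[j] \in [a_j(z), b_j(z)]$. The decisive step, which is the content of Lemma~\ref{lem:diag_box}, is that this coordinatewise membership forces $\contdata \in K$ precisely because $K = \prod_{j=1}^n [a_j(z), b_j(z)]$ is a box; both hypotheses of case (c) are essential, since the diagonal form keeps $P_1, P_0$ diagonal so the coordinates decouple, while the box form converts per-coordinate membership into membership in $K$. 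For generic positive-definite scales the analogous inclusion breaks down, as the counterexample of Lemma~\ref{lem:pd_ball} shows. I would finish by treating the degeneracy bypassed by the invertibility assumption above: if a diagonal entry of $\scapar_\alpha$ vanishes, then (as $\alpha \in (0,1)$ and the scales are non-negative) the matching entries of $\diagscapar_1$ and $\diagscapar_0$ both vanish, the $j$-th equation becomes $0 = 0$, and any choice $\contdata[j] \in [a_j(z), b_j(z)]$ works---so the coordinatewise construction succeeds without inverting $\scapar_\alpha$.
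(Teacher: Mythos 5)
Your proposal is correct and follows essentially the same route as the paper: form the matrices $P_i=\scapar_\alpha^{-1}\alpha_i\scapar_i$ summing to $\id{n}$, reduce the inclusion to showing $P_1\contdata_1+P_0\contdata_0\in\invquantizer(z)$, dispatch cases (a) and (b) by ordinary convexity, and settle case (c) by the coordinatewise box argument that is the content of Lemma~\ref{lem:diag_box}. Your explicit treatment of a vanishing diagonal entry of $\scapar_\alpha$ is a small extra care the paper's proof leaves implicit (it tacitly assumes $\alpha_0\scapar_0+\alpha_1\scapar_1$ invertible), but otherwise the two arguments coincide.
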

    \begin{proof}
        Let $\alpha_0=1-\alpha$ and $\alpha_1=\alpha$ and consider the matrices
        \begin{IEEEeqnarray*}{c}
            \matrixC_i = (\alpha_0\scapar_0 + \alpha_1\scapar_1)^{-1} \alpha_i \scapar_i
        \end{IEEEeqnarray*}
        for $i\in\lbrace 0,1\rbrace$. Consider also that $\matrixC_0+\matrixC_1=\id{n}$.

        Let $\noise\in\setA_\alpha$. Then, there are
        $\noise_i\in\setA_i$ for $i\in\lbrace 0,1\rbrace$ such that
        $\noise=\alpha_0\noise_0+\alpha_1\noise_1$. Furthermore, by \eqref{eq:extremes}
        we have that $\noise_i = \scapar_i \contdata_i - \obsmat\locpar_i$ for $i\in\lbrace0,1\rbrace$,
        where $\contdata_i\in\invquantizer(z)$. Therefore,
        \begin{align*}
            \noise &=
            \sum_{i=0}^1
                \left(
                    \alpha_i \scapar_i \contdata_i -
                    \obsmat \alpha_i\locpar_i
                \right) \\
            &= \left(\alpha_0 \scapar_0 + \alpha_1 \scapar_1\right) (\matrixC_0\contdata_0 + \matrixC_1\contdata_1) 
            - \obsmat \left(\alpha_0\locpar_0+\alpha_1\locpar_1\right).
        \end{align*}
        Then, $\noise \in \ToPars{z}{\locpar_\alpha}{\scapar_\alpha}$
        if and only if $\contdata = \matrixC_0\contdata_0 + \matrixC_1\contdata_1\in\invquantizer(z)$.

        If condition a) is fulfilled, then $\matrixC_i = \alpha_i\id{n}$ and
                $\contdata=\contdata_\alpha$. Because $\invquantizer(z)$ is convex,
                $\contdata\in\invquantizer(z)$.
        If condition b) is fulfilled, then
                $\matrixC_i = \tilde{\alpha}_i\id{n}$ with $\tilde{\alpha}_i = \alpha_i \scalarscapar_i
                / (\alpha_0\scalarscapar_0 + \alpha_1 \scalarscapar_1)$, and
                $\contdata$ is a convex combination of $\contdata_0$ and $\contdata_1$, i.e.,
                $\contdata=\contdata_{\tilde{\alpha}_1}$. Because $\invquantizer(z)$ is convex,
                $\contdata\in\invquantizer(z)$.
        If condition c) is fulfilled, then the $\matrixC_i \in\diag{n}{[0,1]}$ for $i\in\lbrace0,1\rbrace$, where $\diag{n}{[0,1]}$ is the set of $n\times n$ diagonal matrices with elements in $[0,1]$.
                By Lemma~\ref{lem:diag_box}, we then have
                $\contdata\in\prod_{j=1}^n \left[\contdata_1[j],\contdata_2[j]\right]$.
                Because $\invquantizer(z)=\prod_{j=1}^n [a_j(z),b_j(z)]$, and 
                $a_j(z)\leq \contdata_1[j],\contdata_2[j] \leq b_j(z)$,
               we have that $\contdata\in\invquantizer(z)$.
        Therefore, if either a), b) or c) are given, 
        $\noise \in \ToPars{z}{\locpar_\alpha}{\scapar_\alpha}$.
    \end{proof}

    We can now combine Lemmas~\ref{lem:convex}--\ref{lem:Minkowski_in_convex_parameters} with Theorem~\ref{th:prekopa} to show Theorem~\ref{th:result}.

    \begin{proof}[Proof of Theorem~\ref{th:result}]
				Consider $\setA_0$ and $\setA_1$ as defined in \eqref{eq:extremes}.
				By Lemma~\ref{lem:convex}, they are convex sets.
				By Lemmas~\ref{lem:combination_parameters_in_Minkowski}~and~\ref{lem:Minkowski_in_convex_parameters},
				if a), b) or c) are fulfilled, we have that
				$\setA_\alpha = \ToPars{z}{\locpar_\alpha}{\scapar_\alpha}$. Then, Theorem~\ref{th:prekopa} and \eqref{eq:quantized_likelihood} yield
				\begin{IEEEeqnarray*}{rl}
					\likelihood{\locpar_\alpha,\scapar_\alpha}{z} &= \prob{\noise}\left[ \setA_\alpha \right]\\
					&\geq \prob{\noise}\left[\setA_1\right]^\alpha \prob{\noise}\left[\setA_0\right]^{(1-\alpha)}\\
					&=\likelihood{\locpar_1,\scapar_1}{z}^\alpha
		            \likelihood{\locpar_0,\scapar_0}{z}^{1-\alpha}.
				\end{IEEEeqnarray*}
    \end{proof}

    \section{Acknowledgments}

		We acknowledge access to the facilities and expertise of the CIBM Center
for Biomedical Imaging, a Swiss research center of excellence founded and
supported by Lausanne University Hospital (CHUV), University of Lausanne
(UNIL), École polytechnique fédérale de Lausanne (EPFL), University of
Geneva (UNIGE), and Geneva University Hospitals (HUG).

		This work was partially supported by the SRA ICT TNG project Privacy-preserved Internet Traffic Analytics (PITA).

	\appendix[Matrix combinations] 
            In the proof of Lemma~\ref{lem:Minkowski_in_convex_parameters}, we use that sets of the form $\prod_{j=1}^n [a_j,b_j]$ with $a_j,b_j\in\overline{\reals}$ are closed with respect to the
generalization of convex combinations to diagonal matrices. In Figure~\ref{fig:matrix_combs}a,
we include an illustration of a practical case in $2$D. Here, we provide the proof. To our knowledge, this has not been reported before.

\begin{lemma}[Diagonal Matrices Whose Sum is the Identity Generate Squares] \label{lem:diag_box}
    Let $\diag{n}{[0,1]}$ be the set of $n\times n$ diagonal matrices with elements in $[0,1]$, and let $\contdata_0,\contdata_1\in\reals^n$. Then,
    \begin{align*}
        \mathcal{H} &\coloneqq
        \left\lbrace 
            \matrixC \contdata_0 + (\id{n}-\matrixC) \contdata_1 : 
            \matrixC \in \mathcal{D}_{n}\left[0,1\right]
        \right\rbrace \\
        &= \prod_{j=1}^n \left[ \contdata_0[j],\contdata_1[j] \right] \eqqcolon \mathcal{H}_{\blacksquare}\,.
    \end{align*}    
\end{lemma}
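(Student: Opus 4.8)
The plan is to exploit the fact that, because $\matrixC$ is diagonal, the affine map $\matrixC\mapsto \matrixC\contdata_0+(\id{n}-\matrixC)\contdata_1$ acts \emph{independently} on each coordinate, which decouples the claimed set identity into $n$ scalar statements that are then recombined as a Cartesian product. Concretely, I would write $\matrixC=\operatorname{diag}(c_1,\dots,c_n)$ with each $c_j\in[0,1]$, so that the $j$-th component of $\matrixC\contdata_0+(\id{n}-\matrixC)\contdata_1$ is simply $c_j\contdata_0[j]+(1-c_j)\contdata_1[j]$, a scalar convex combination of $\contdata_0[j]$ and $\contdata_1[j]$ governed by the single free parameter $c_j$. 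From here the equality $\mathcal{H}=\mathcal{H}_{\blacksquare}$ follows by the usual double inclusion, argued coordinatewise.

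For $\mathcal{H}\subseteq\mathcal{H}_{\blacksquare}$, I would observe that for any $\matrixC\in\diag{n}{[0,1]}$ and each $j$, the value $c_j\contdata_0[j]+(1-c_j)\contdata_1[j]$ is a convex combination of the two endpoints and hence lies in the closed segment $[\contdata_0[j],\contdata_1[j]]$; since this holds in every coordinate, the image vector lies in the product box $\mathcal{H}_{\blacksquare}$. For the reverse inclusion $\mathcal{H}_{\blacksquare}\subseteq\mathcal{H}$, I would start from an arbitrary target $\contdata\in\mathcal{H}_{\blacksquare}$ and, for each $j$, exhibit a scalar $c_j\in[0,1]$ solving $c_j\contdata_0[j]+(1-c_j)\contdata_1[j]=\contdata[j]$: when $\contdata_0[j]\neq\contdata_1[j]$ this is the explicit $c_j=(\contdata[j]-\contdata_1[j])/(\contdata_0[j]-\contdata_1[j])$, which lies in $[0,1]$ precisely because $\contdata[j]$ is between the endpoints, and when $\contdata_0[j]=\contdata_1[j]$ any $c_j$ works. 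Assembling $\matrixC=\operatorname{diag}(c_1,\dots,c_n)\in\diag{n}{[0,1]}$ then yields a preimage of $\contdata$, so $\contdata\in\mathcal{H}$.

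The step carrying all the weight — and the only place where the diagonal restriction is genuinely needed — is the independence of the coordinate choices in the reverse inclusion: because $\matrixC$ is diagonal, each $c_j$ may be selected on its own, so the attainable set is the \emph{full} Cartesian product of the per-coordinate segments rather than a lower-dimensional or curved slice of it. For a general (even positive-definite) $\matrixC$ the coordinates are coupled and the image is not a box, which is exactly the contrast drawn by Lemma~\ref{lem:pd_ball}. The remaining care is purely bookkeeping: the degenerate case $\contdata_0[j]=\contdata_1[j]$, and the convention that $[\contdata_0[j],\contdata_1[j]]$ denotes the closed segment between the endpoints regardless of their order, both of which are absorbed by the explicit scalar parametrization above.
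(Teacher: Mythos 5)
Your proof is correct and follows essentially the same coordinatewise double-inclusion argument as the paper, including the same explicit formula $c_j=(\contdata[j]-\contdata_1[j])/(\contdata_0[j]-\contdata_1[j])$ for the reverse inclusion. If anything, you are slightly more careful than the paper's version, which applies that formula without separating out the degenerate case $\contdata_0[j]=\contdata_1[j]$ that you handle explicitly.
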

\begin{proof}
    For $\mathcal{H}_{\blacksquare}\subseteq\mathcal{H}$, let $\contdata \in \mathcal{H}_{\blacksquare}$. 
    If $\alpha_j = (\contdata[j]-\contdata_1[j])/(\contdata_0[j]-\contdata_1[j])$, then $\alpha_j\in[0,1]$.
    If $\matrixC\in\diag{n}{[0,1]}$ is the diagonal matrix such that 
    $\matrixC[j,j]=\alpha_j$, then 
    $\matrixC \contdata_0 + (\id{n}-\matrixC) \contdata_1 = \contdata$. Thus,
    $\contdata\in\mathcal{H}$.
    
    For $\mathcal{H}\subseteq \mathcal{H}_{\blacksquare}$, let $\contdata\in\mathcal{H}$. 
    Then, we have that $\alpha_j=\matrixC[j,j]\in[0,1]$, and 
    $\contdata[j] = \alpha_j \contdata_{1}[j] + (1-\alpha_j)\contdata_0[j]$, and thus,
    $\contdata[j]\in [\contdata_0[j], \contdata_1[j]]$. Therefore, 
    $\contdata \in \mathcal{H}_{\blacksquare}$.
\end{proof}

\begin{figure}
    \centering
    \begin{tikzpicture}[scale=1.2]
     \draw [->] (-.05,0) -- (1.2,0) node[anchor = west] {$\contdata[1]$}; 
    \draw [->] (0,-.05) -- (0, 0.7) node[anchor= south] {$\contdata[2]$};
    \draw[red] (1,-.5) rectangle (-1,.5);
    \draw[black,opacity=0.5] plot[only marks,mark=*,mark size = 0.1] file{figs/rectangle.txt};
    \node[blue,opacity=0.5] (p0) at (-1,0.5) {$\bullet$};
    \node at (p0) [anchor=south east] {$\contdata_0$};
    \node[blue,opacity=0.5] (p1) at (1,-0.5) {$\bullet$};
    \node at (p1) [anchor=north west] {$\contdata_1$};
    \node at (-1.3,0) {a)};
\end{tikzpicture}
\begin{tikzpicture}[scale=1.2]
     \draw [->] (-.05,0) -- (1.3,0) node[anchor = west] {$\contdata[1]$}; 
    \draw [->] (0,-.05) -- (0, 1.3) node[anchor= south] {$\contdata[2]$};
    \draw[red] (0,0) circle (1.118);
    \draw[black,opacity=0.5] plot[only marks,mark=*,mark size = 0.1] file{figs/circle.txt};
    \node[blue,opacity=0.5] (p0) at (-1,0.5) {$\bullet$};
    \node at (p0) [anchor=south east] {$\contdata_0$};
    \node[blue,opacity=0.5] (p1) at (1,-0.5) {$\bullet$};
    \node at (p1) [anchor=north west] {$\contdata_1$};
    \node at (-1.3,-.25) {b)};
\end{tikzpicture}

\vspace{5pt}
    
    \vspace{-5pt}
    
    \caption{In black, points obtained by combinations with matrices whose sum is
        the identity matrix, i.e., $\matrixC\contdata_1 + (\id{n}-\matrixC)\contdata_0$,
        where $\matrixC$ were, in a), random diagonal matrices from $\diag{n}{[0,1]}$, and, in b),
        random positive semidefinite matrices from $\PDreals{n}$ with 
        $\rho\left(\matrixC\right)\leq 1$. In blue, $\contdata_0$ and $\contdata_1$.
    \label{fig:matrix_combs}}
\end{figure}

On the one hand, Lemma~\ref{lem:diag_box} allows for the most general result in terms of the scale parameter 
$\scapar$ we have obtained (Theorem~\ref{th:result}.c). 
On the other hand, the corresponding result for arbitrary scale parameters suggests that the strategy behind our proof of Theorem~\ref{th:result} might not 
generalize well. We include it here in Lemma~\ref{lem:pd_ball} for completeness. To our knowledge, it has not been reported before.

\begin{lemma}[Positive Semidefinite Matrices that Sum to the Identity Generate Balls\footnote{ See \url{www.geogebra.org/m/hdxtmz3b} and \url{www.geogebra.org/m/tskjev2m} for 2D and 3D dynamic examples of Lemma~\ref{lem:pd_ball} based on the SVD of any positive semidefinite matrix.}] \label{lem:pd_ball}
    Let $\PDbounded{n}$ be the set of real $n\times n$ symmetric positive-semidefinite matrices
    with spectral radius less or equal than $1$, i.e., $\rho\left(\matrixC\right)\leq 1$, and $\contdata_0,\contdata_1\in\reals^n$. Then
    \begin{align*}
        \mathcal{S} & \coloneqq \left\lbrace \matrixC\contdata_0 
        + \left(\id{n}-\matrixC\right)\contdata_1 : \matrixC \in \PDbounded{n}  \right\rbrace \\
        &= \mathcal{B}\left(\frac{\contdata_0+\contdata_1}{2},\frac{1}{2} \left\Vert \contdata_1 - \contdata_0 \right\Vert_2\right)\eqqcolon \mathcal{S}_\bullet\,,
    \end{align*}
    where $\mathcal{B}(\contdata_\mathrm{c},r)$ is the closed $\ell_2$ ball centered at $\contdata_\mathrm{c}\in\reals^n$ with radius $r>0$.
\end{lemma}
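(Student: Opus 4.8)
The plan is to reduce the identity to a statement about how symmetric contractions act on a single vector, and then to prove the two inclusions separately. First I would recenter: writing $\contdata_\mathrm{c} = (\contdata_0+\contdata_1)/2$ and $\mathbf{d} = (\contdata_1-\contdata_0)/2$, so that $\contdata_0=\contdata_\mathrm{c}-\mathbf{d}$ and $\contdata_1=\contdata_\mathrm{c}+\mathbf{d}$, a direct expansion gives
\begin{equation*}
\matrixC\contdata_0 + (\id{n}-\matrixC)\contdata_1 = \contdata_\mathrm{c} + (\id{n}-2\matrixC)\mathbf{d}.
\end{equation*}
The change of variables $\mathbf{M}=\id{n}-2\matrixC$ is a bijection between $\PDbounded{n}$ (symmetric matrices with eigenvalues in $[0,1]$) and the symmetric matrices with eigenvalues in $[-1,1]$, i.e.\ symmetric $\mathbf{M}$ with $\|\mathbf{M}\|_2\le 1$. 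Hence $\mathcal{S}=\contdata_\mathrm{c}+\{\mathbf{M}\mathbf{d}:\mathbf{M}=\mathbf{M}^{\top},\ \|\mathbf{M}\|_2\le 1\}$, while $\mathcal{S}_\bullet=\contdata_\mathrm{c}+\mathcal{B}(\mathbf{0},r)$ with $r=\|\mathbf{d}\|_2$. The claim therefore reduces to showing that the image of $\mathbf{d}$ under all symmetric contractions is exactly the closed ball of radius $r$; the case $\mathbf{d}=\mathbf{0}$ is trivial, so I assume $\mathbf{d}\ne\mathbf{0}$.

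The inclusion $\mathcal{S}\subseteq\mathcal{S}_\bullet$ is immediate from submultiplicativity of the spectral norm: $\|\mathbf{M}\mathbf{d}\|_2\le\|\mathbf{M}\|_2\,\|\mathbf{d}\|_2\le r$ for every symmetric contraction $\mathbf{M}$.

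For the reverse inclusion I would argue by convexity together with a boundary construction, rather than exhibiting a matrix for each target. Since $\matrixC\mapsto\contdata_\mathrm{c}+(\id{n}-2\matrixC)\mathbf{d}$ is affine and $\PDbounded{n}$ is convex and compact, $\mathcal{S}$ is convex and compact. A closed ball is the convex hull of its bounding sphere, so it suffices to show that $\mathcal{S}$ contains every point $\contdata_\mathrm{c}+r\hat{\mathbf{w}}$ with $\hat{\mathbf{w}}$ a unit vector; convexity then yields $\mathcal{S}\supseteq\mathcal{S}_\bullet$. Equivalently, I must build a symmetric contraction sending $\hat{\mathbf{d}}=\mathbf{d}/\|\mathbf{d}\|_2$ to $\hat{\mathbf{w}}$. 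The Householder reflection $\mathbf{M}=\id{n}-2\,\mathbf{a}\mathbf{a}^{\top}/\|\mathbf{a}\|_2^{2}$ with $\mathbf{a}=\hat{\mathbf{d}}-\hat{\mathbf{w}}$ does exactly this: it is symmetric with eigenvalues in $\{-1,+1\}$, hence $\|\mathbf{M}\|_2=1$, and a one-line computation gives $\mathbf{M}\hat{\mathbf{d}}=\hat{\mathbf{w}}$ (the degenerate cases $\hat{\mathbf{w}}=\pm\hat{\mathbf{d}}$ are covered by $\mathbf{M}=\pm\id{n}$). Scaling by $\|\mathbf{d}\|_2$ gives $\mathbf{M}\mathbf{d}=r\hat{\mathbf{w}}$.

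The one genuinely delicate point is this surjectivity ($\supseteq$) direction: realizing an arbitrary ball point as $\mathbf{M}\mathbf{d}$ for a symmetric contraction $\mathbf{M}$. Reducing to the sphere via convexity is what makes it easy, since on the sphere a reflection suffices. As an alternative that avoids the convex-hull fact, one can hit every target $\mathbf{v}\in\mathcal{B}(\mathbf{0},r)$ directly by prescribing $\mathbf{M}$ on the plane $\mathrm{span}\{\mathbf{d},\mathbf{v}\}$ and completing its free $2\times 2$ diagonal entry; checking feasibility of that entry reduces precisely to the inequality $\|\mathbf{v}\|_2\le\|\mathbf{d}\|_2$. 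I would present the convexity argument as the main line, since it is shorter and dimension-free.
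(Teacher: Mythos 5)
Your proof is correct, but for the harder inclusion it takes a genuinely different route from the paper's. Both arguments handle $\mathcal{S}\subseteq\mathcal{S}_\bullet$ the same way (an operator-norm bound; your $\id{n}-2\matrixC$ is just twice the paper's $\matrixC-\id{n}/2$, so the two are cosmetically distinct at most). For $\mathcal{S}_\bullet\subseteq\mathcal{S}$, however, the paper exhibits an explicit witness for \emph{every} point of the ball: setting $\tilde{\contdata}=\contdata-\contdata_1$ and $\tilde{\contdata}_0=\contdata_0-\contdata_1$, it derives $0\leq\|\tilde{\contdata}\|_2^2\leq\tilde{\contdata}_0^{\mathrm{T}}\tilde{\contdata}$ from the ball condition and then checks that the rank-one matrix $\matrixC=\tilde{\contdata}\tilde{\contdata}^{\mathrm{T}}/(\tilde{\contdata}^{\mathrm{T}}\tilde{\contdata}_0)$ lies in $\PDbounded{n}$ and maps $\tilde{\contdata}_0$ to $\tilde{\contdata}$. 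You instead observe that $\mathcal{S}$ is convex (affine image of the convex set $\PDbounded{n}$), reduce to hitting only the bounding sphere, and realize each sphere point via a Householder reflection of $\hat{\mathbf{d}}$ onto $\hat{\mathbf{w}}$; all steps check out, including the degenerate cases $\mathbf{d}=\mathbf{0}$ and $\hat{\mathbf{w}}=\pm\hat{\mathbf{d}}$. The paper's construction is more elementary in that it needs no appeal to the fact that a closed ball is the convex hull of its sphere, and it is fully constructive for interior points too (though it silently relies on $\tilde{\contdata}^{\mathrm{T}}\tilde{\contdata}_0\neq 0$, which the derived inequality only guarantees when $\tilde{\contdata}\neq\mathbf{0}$, so the point $\contdata=\contdata_1$ needs the trivial witness $\matrixC=\mathbf{0}$); your version buys a cleaner conceptual picture --- after recentring, the problem is exactly ``which points are reachable from $\mathbf{d}$ by symmetric contractions,'' and on the sphere orthogonal symmetric matrices suffice --- at the price of a convexity argument and a slightly less explicit answer for interior points.
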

\begin{proof}
    For $\mathcal{S}_\bullet\subseteq\mathcal{S}$, let
    $\contdata\in\mathcal{S}$. Then, there is a $\matrixC \in \PDbounded{n}$ such that 
    $\contdata = \matrixC \contdata_0 + \left(\id{n}-\matrixC\right)\contdata_1$.
    If $\contdata_\mathrm{c} = (\contdata_0+\contdata_1)/2$, then
    \begin{IEEEeqnarray*}{rl}
			\left\| \contdata - \contdata_\mathrm{c} \right\|_2\, & = \left\| \left( \matrixC - \frac{\id{n}}{2}\right) 
			\left(\contdata_0 - \contdata_1 \right) \right\|_2 \\
			&\leq \biggl\|\matrixC - \frac{\id{n}}{2}\biggr\|_{2\to 2} \left\|\contdata_0 - \contdata_1 \right\|_2 \\
			&\leq \frac{1}{2} \left\| \contdata_1 - \contdata_0 \right\|_2\,.
    \end{IEEEeqnarray*}
    Therefore, $\contdata \in \mathcal{S}_\bullet$.
    Here, $\Vert\cdot\Vert_{2\to 2}$ is a the operator norm of a matrix with respect to the $\ell_2$ norm, and we have used that it coincides with the 
    spectral radius $\rho(\cdot)$ for Hermitian matrices.
    
    For $\mathcal{S}_\bullet\subseteq\mathcal{S}$, let $\contdata\in\mathcal{S}_\bullet$ and
    $\contdata_\mathrm{c} = (\contdata_0+\contdata_1)/2$. Then, consider $\tilde{\contdata} = 
    \contdata-\contdata_1$ and $\tilde{\contdata}_0=\contdata_0-\contdata_1$. Because
    $\contdata\in\mathcal{S}_\bullet$, we have that 
    $\left\| \tilde{\contdata} - \tilde{\contdata}_0/2\right\|_2^2 = 
    \left\| \contdata - \contdata_\mathrm{c}\right\|_2^2 
    \leq \left(\left\| \contdata_1 - \contdata_0 \right\|_2/2\right)^2 = 
    \left(\left\|\tilde{\contdata}_0\right\|_2/2\right)^2$. 
    Expanding the squares, we obtain $0\leq\left\|\tilde{\contdata}\right\|_2^2 \leq \tilde{\contdata}_0^{\mathrm{T}} \tilde{\contdata}$.
    If we consider then the matrix 
    $\matrixC = \tilde{\contdata}\tilde{\contdata}^{\mathrm{T}} / 
    \tilde{\contdata}^{\mathrm{T}} \tilde{\contdata}_0$, we see that it is a 
    rank $1$ matrix with a single non-zero eigenvalue 
    $\lambda = \mathrm{tr}\left(\matrixC\right) = 
    \left\| \tilde{\contdata}\right\|^2 / \tilde{\contdata}^{\mathrm{T}} \tilde{\contdata}_0
    \in[0,1]$, i.e., $\matrixC \in \PDbounded{n}$. 
    Furthermore, $\matrixC\left(\contdata_0 -
    \contdata_1\right) = \matrixC \tilde{\contdata}_0 = \tilde{\contdata} = \contdata - 
    \contdata_1$, i.e., $\contdata = \matrixC \contdata_0 + (\id{n}-\matrixC)\contdata_1$. 
    Therefore, $\contdata \in \mathcal{S}$.

\end{proof}

\begin{remark}
    To apply our proof technique for Lemma~\ref{lem:Minkowski_in_convex_parameters} to arbitrary scale parameters $\scapar\in\PDreals{n}$, we would need to restrict the quantization regions so that $\contdata_0,\contdata_1 \in \invquantizer(z)$ implies $\mathcal{S}_\bullet \subset \invquantizer(z)$. Our intuitive understanding is that only trivial quantizers would fulfill this property.    However, because Theorem~\ref{th:prekopa} is sufficient but not necessary, Lemma~\ref{lem:pd_ball} does not preclude 
    likelihood logconcavity results for that case.
\end{remark}

    % Bibliography (find way to make smaller)
	\bibliographystyle{IEEEtranMod}
	\bibliography{QML}

% Generated by IEEEtran.bst, version: 1.14 (2015/08/26)
\begin{thebibliography}{10}
\providecommand{\url}[1]{#1}
\csname url@samestyle\endcsname
\providecommand{\newblock}{\relax}
\providecommand{\bibinfo}[2]{#2}
\providecommand{\BIBentrySTDinterwordspacing}{\spaceskip=0pt\relax}
\providecommand{\BIBentryALTinterwordstretchfactor}{4}
\providecommand{\BIBentryALTinterwordspacing}{\spaceskip=\fontdimen2\font plus
\BIBentryALTinterwordstretchfactor\fontdimen3\font minus
  \fontdimen4\font\relax}
\providecommand{\BIBforeignlanguage}[2]{{%
\expandafter\ifx\csname l@#1\endcsname\relax
\typeout{** WARNING: IEEEtran.bst: No hyphenation pattern has been}%
\typeout{** loaded for the language `#1'. Using the pattern for}%
\typeout{** the default language instead.}%
\else
\language=\csname l@#1\endcsname
\fi
#2}}
\providecommand{\BIBdecl}{\relax}
\BIBdecl

\bibitem{Widrow2008}
B.~Widrow and I.~Koll\'{a}r, \emph{Quantization noise: {R}oundoff error in
  digital computation, signal processing, control, and communications}.\hskip
  1em plus 0.5em minus 0.4em\relax Cambridge University Press, 2008.

\bibitem{Azizzadeh2019}
A.~Azizzadeh, R.~Mohammadkhani, S.~V. A.-D. Makki, and E.~Björnson, ``{BER}
  performance analysis of coarsely quantized uplink massive {MIMO},''
  \emph{Signal Processing}, 2019.

\bibitem{Li2017}
S.~Li, X.~Li, X.~Wang, and J.~Liu, ``Decentralized sequential composite
  hypothesis test based on one-bit communication,'' \emph{IEEE Transactions on
  Information Theory}, no.~99, 2017.

\bibitem{Ren2019}
J.~{Ren}, T.~{Zhang}, J.~{Li}, and P.~{Stoica}, ``Sinusoidal parameter
  estimation from signed measurements via majorization-minimization based
  {RELAX},'' \emph{IEEE Transactions on Signal Processing}, 2019.

\bibitem{Khobahi2019}
S.~Khobahi, N.~Naimipour, M.~Soltanalian, and Y.~C. Eldar, ``Deep signal
  recovery with one-bit quantization,'' in \emph{2019 IEEE International
  Conference on Acoustics, Speech and Signal Processing (ICASSP)}, 2019.

\bibitem{Wen2016}
C.~K. Wen, C.~J. Wang, S.~Jin, K.~K. Wong, and P.~Ting, ``{B}ayes-optimal joint
  channel-and-data estimation for massive {MIMO} with low-precision {ADC}s,''
  \emph{IEEE Transactions on Signal Processing}, vol.~64, no.~10, pp.
  2541--2556, May 2016.

\bibitem{Zayyani2016}
H.~Zayyani, M.~Korki, and F.~Marvasti, ``Dictionary learning for blind one bit
  compressed sensing,'' \emph{IEEE Signal Processing Letters}, vol.~23, no.~2,
  pp. 187--191, Feb. 2016.

\bibitem{Zayyani2016a}
------, ``A distributed 1-bit compressed sensing algorithm robust to impulsive
  noise,'' \emph{IEEE Communications Letters}, vol.~20, no.~6, pp. 1132--1135,
  Jun. 2016.

\bibitem{Stein2018}
M.~S. {Stein}, S.~{Bar}, J.~A. {Nossek}, and J.~{Tabrikian}, ``Performance
  analysis for channel estimation with 1-bit {ADC} and unknown quantization
  threshold,'' \emph{IEEE Transactions on Signal Processing}, vol.~66, no.~10,
  pp. 2557--2571, May 2018.

\bibitem{Aggarwal2008}
C.~C. Aggarwal and P.~S. Yu, ``A general survey of privacy-preserving data
  mining models and algorithms,'' in \emph{Privacy-Preserving Data
  Mining}.\hskip 1em plus 0.5em minus 0.4em\relax Springer, 2008, pp. 11--52.

\bibitem{Gao2018}
P.~{Gao}, R.~{Wang}, M.~{Wang}, and J.~H. {Chow}, ``Low-rank matrix recovery
  from noisy, quantized, and erroneous measurements,'' \emph{IEEE Transactions
  on Signal Processing}, vol.~66, no.~11, pp. 2918--2932, Jun. 2018.

\bibitem{Zhang2018}
Y.~Zhang, C.~Liao, M.~Liu, S.~Lu, and Z.~Li, ``Optimization of {3D} shell
  electrode detectors---{A} type of honeycomb shell electrode detector,''
  \emph{AIP Advances}, vol.~8, no.~12, p. 125309, 2018.

\bibitem{Danielsson2021}
M.~Danielsson, M.~Persson, and M.~Sj{\"o}lin, ``Photon-counting x-ray detectors
  for {CT},'' \emph{Physics in Medicine \& Biology}, vol.~66, no.~3, p. 03TR01,
  Jan. 2021.

\bibitem{Koklu2012}
G.~Köklü, J.~Ghaye, R.~Beuchat, G.~De~Micheli, Y.~Leblebici, and S.~Carrara,
  ``Quantitative comparison of commercial ccd and custom-designed cmos camera
  for biological applications,'' in \emph{2012 IEEE International Symposium on
  Circuits and Systems (ISCAS)}, 2012, pp. 2063--2066.

\bibitem{Pratt1981}
J.~W. Pratt, ``Concavity of the log likelihood,'' \emph{Journal of the American
  Statistical Association}, vol.~76, no. 373, pp. 103--106, 1981.

\bibitem{Burridge1982}
J.~Burridge, ``Some unimodality properties of likelihoods derived from grouped
  data,'' \emph{Biometrika}, vol.~69, no.~1, pp. 145--151, 1982.

\bibitem{Boyd2004}
S.~Boyd and L.~Vandenberghe, \emph{Convex optimization}.\hskip 1em plus 0.5em
  minus 0.4em\relax Cambridge University Press, 2004.

\bibitem{Conti2009}
A.~{Conti}, D.~{Panchenko}, S.~{Sidenko}, and V.~{Tralli}, ``Log-concavity
  property of the error probability with application to local bounds for
  wireless communications,'' \emph{IEEE Transactions on Information Theory},
  vol.~55, no.~6, pp. 2766--2775, Jun. 2009.

\bibitem{Msechu2012}
E.~J. {Msechu} and G.~B. {Giannakis}, ``Sensor-centric data reduction for
  estimation with {WSN}s via censoring and quantization,'' \emph{IEEE
  Transactions on Signal Processing}, vol.~60, no.~1, pp. 400--414, Jan. 2012.

\bibitem{Prekopa1973}
A.~Pr{\'e}kopa, ``Logarithmic concave measures and functions,'' \emph{Acta
  Scientiarum Mathematicarum}, vol.~34, no.~1, pp. 334--343, 1973.

\end{thebibliography}

\end{document}